\let\NAT@parse\undefined
\newcommand{\dif}{\mathop{}\!\mathrm{d}}
\theoremstyle{defn}
\theoremstyle{assumption}
\newtheorem{assumption}{Assumption}
\theoremstyle{lem}
\theoremstyle{thm}
\newtheorem{thm}{Theorem}
\theoremstyle{remark}
\newtheorem{remark}{Remark}
\theoremstyle{cor}
\begin{document}
	\title{Online inverse reinforcement learning with limited data}
	
	\author{Ryan Self, S M Nahid Mahmud, Katrine Hareland and Rushikesh Kamalapurkar\thanks{The authors are with the School of Mechanical and Aerospace Engineering, Oklahoma State University, Stillwater, OK, USA. {\tt\small \{rself,nahid.mahmud,katrine.hareland, rushikesh.kamalapurkar\}@okstate.edu}. This research was supported, in part, by the National Science Foundation (NSF) under award number 1925147. Any opinions, findings, conclusions, or recommendations detailed in this article are those of the author(s), and do not necessarily reflect the views of the sponsoring agencies.}}
	\maketitle
	\begin{abstract} 
		This paper addresses the problem of online inverse reinforcement learning for systems with limited data and uncertain dynamics. In the developed approach, the state and control trajectories are recorded online by observing an agent perform a task, and reward function estimation is performed in real-time using a novel inverse reinforcement learning approach. Parameter estimation is performed concurrently to help compensate for uncertainties in the agent's dynamics. Data insufficiency is resolved by developing a data-driven update law to estimate the optimal feedback controller. The estimated controller can then be queried to artificially create additional data to drive reward function estimation.
	\end{abstract}
	\section{Introduction}
	
	Based on the premise that the most succinct representation of the behavior of an entity is its reward structure \cite{SCC.Ng.Russell2000}, this paper aims to recover the reward (or cost) function of a demonstrator by monitoring its state and control trajectories. Reward function estimation is performed in the presence of modeling uncertainties for situations with limited data via inverse reinforcement learning (IRL) \cite{SCC.Ng.Russell2000,SCC.Russell1998}.
	
	While IRL in an \emph{offline} setting has a rich history of literature \cite{SCC.Ng.Russell2000,SCC.Russell1998,SCC.Abbeel.Ng2004,SCC.Abbeel.Ng2005,SCC.Ratliff.Bagnell.ea2006,SCC.Ziebart.Maas.ea2008,SCC.Zhou.Bloem.ea2018,SCC.Levine.Popovic.ea2010,SCC.Neu.Szepesvari2007,SCC.Syed.Schapire2008,SCC.Levine.Popovic.ea2011}, little work has been done to address IRL in an online setting. One reason for this is the limited data provided by a single demonstration. 
	
	Preliminary results on online IRL are available for linear systems, in results such as \cite{SCC.Kamalapurkar2018} and \cite{SCC.Molloy.Ford.ea2018}, and for nonlinear systems, in results such as \cite{SCC.Self.Harlan.ea2019a} and \cite{SCC.Self.Abudia.ea2020}. However, \cite{SCC.Kamalapurkar2018} and \cite{SCC.Self.Harlan.ea2019a} exploit access to demonstrator's feedback policy, \cite{SCC.Molloy.Ford.ea2018} requires exact model knowledge, and \cite{SCC.Self.Abudia.ea2020} exploits identical disturbances to provide sufficient excitation. The main contribution of this paper is the development of a novel method for reward function estimation for an agent in situations where estimation of the demonstrator's optimal feedback law is less data-intensive than direct estimation of its reward function.
	
	The novelty in the technique developed in this paper is a recursive model-based IRL approach which facilitates the use of off-trajectory state-action pairs. A majority of IRL methods are trajectory-driven and model-free. As a result, the trajectories need to be sufficiently information-rich for reward function estimation. The technique developed in this paper is model-based, and as a result, once a model is learned, arbitrary state-action pairs can be used for IRL as long as the action is the optimal action for that state. In \cite{SCC.Kamalapurkar2018} and \cite{SCC.Self.Harlan.ea2019a}, the off-trajectory state-action pairs are generated under the assumption that the learner either knows the demonstrator's optimal feedback law or can query the demonstrator to find out what the optimal action would be at a given off-trajectory state. In this paper, we develop a novel IRL approach that relaxes the aforementioned assumption.
	
	The key idea in this paper is to estimate the optimal feedback controller of the agent online, and use that estimate to artificially create off-trajectory data to drive reward function estimation. In the authors' previous work \cite{SCC.Self.Harlan.ea2019a}, reward function estimation is performed directly using the agents observed trajectories. Instead, in this paper, the trajectory information is used to estimate the optimal feedback controller. This controller is parameterized as a neural network and estimated using a concurrent learning update law. The estimated controller is simultaneously queried to create off-trajectory data which is then used for reward function estimation via IRL. Since the optimal controller is estimated using a neural network, the controller can be estimated independent of the modeling uncertainty. In the developed approach, parameter estimation and two update laws for estimation of the optimal feedback controller and reward function are utilized simultaneously, to achieve uniform ultimate boundedness of the unknown reward function weights.
	
	The paper is organized as follows: Section \ref{sec:Notation} explains the notation used throughout the paper. Section \ref{sec:Problem Formulation} details the problem formulation. Section \ref{sec: Estimate Control} shows how to estimate the optimal controller. Section \ref{sec:IRLNoState} explains the IRL algorithm. Section \ref{sec: Sim} shows a simulation example and Section \ref{sec: Conclusion} concludes the paper. 
	
	\section{Notation}\label{sec:Notation}
	The notation $\mathbb{R}^{n}$ represents the $n-$dimensional Euclidean space, and the elements of $\mathbb{R}^{n}$ are interpreted as column vectors, where $\left(\cdot\right)^{T}$ denotes the vector transpose operator. The set of positive integers excluding 0 is denoted by $\mathbb{N}$. For $a\in\mathbb{R},$ $\mathbb{R}_{\geq a}$ denotes the interval $\left[a,\infty\right)$, and $\mathbb{R}_{>a}$ denotes the interval $\left(a,\infty\right)$. If $a\in\mathbb{R}^{m}$ and $b\in\mathbb{R}^{n}$, then $\left[a;b\right]$ denotes the concatenated vector $\begin{bmatrix}a\\
	b
	\end{bmatrix}\in\mathbb{R}^{m+n}$. The notations $\text{I}_{n}$ and $0_{n}$ denote the $n\times n$ identity matrix and the zero element of $\mathbb{R}^{n}$, respectively. Whenever it is clear from the context, the subscript $n$ is suppressed.
	
	\section{Problem Formulation} \label{sec:Problem Formulation}
	Consider an agent with the following dynamics
	\begin{align}
	\dot{x}&=f(x,u),\label{eq:Nonlinear Problem FormulationNoState}
	\end{align}
	where $x:\mathbb{R}_{\geq T_0}\rightarrow \mathbb{R}^n$ is the state, $u:\mathbb{R}_{\geq T_0}\rightarrow \mathbb{R}^m$ is the control, and $f:\mathbb{R}^n\times\mathbb{R}^m\to\mathbb{R}^n$ is a continuously differentiable function. %$f\in C^2\left(\mathbb{R}^n\times\mathbb{R}^m,\mathbb{R}^n\right)$\footnote{The notation $f\in C^N\left(X,Y\right)$ denotes that the function $f:X\to Y$ is $N^\text{th}$ order continuously differentiable.} is the agents dynamics.
	%. If a nominal dynamic model of the agent is available, then the dynamics in (\ref{eq:Nonlinear Problem FormulationNoState}) can then be separated into 
%	\begin{align}
%	\dot{x}&= f^o(x,u) + g(x,u),
%	\label{eq: NLSysNoState}
%	\end{align}
%	where $f^o\in C^2\left(\mathbb{R}^n\times\mathbb{R}^m,\mathbb{R}^n\right)$\footnote{The notation $f\in C^N\left(X,Y\right)$ denotes that the function $f:X\to Y$ is $N^\text{th}$ order continuously differentiable.} represents the known terms and $g\in C^2\left(\mathbb{R}^n\times\mathbb{R}^m,\mathbb{R}^n\right)$ represents the unknown terms\footnote{If a nominal model is not available, $f^o(x,u):=0 \ \forall \ (x,u)\in \mathbb{R}^n\times \mathbb{R}^m$.}.
%	
%	The unknown function $g$ can be approximated
%	using basis functions as
%	\begin{equation}\label{eq: gwiththeta}
%	g\left(x,u,\theta\right)=\theta^{T}\sigma\left(x,u\right)+\epsilon\left(x,u\right),
%	\end{equation}
%	where $\sigma\in C^2\left(\mathbb{R}^{n}\times\mathbb{R}^{m},\mathbb{R}^{p}\right)$ and $\epsilon\in C^2(\mathbb{R}^{n}\times\mathbb{R}^{m},\mathbb{R}^{n})$
%	denote the basis vector and the approximation error, respectively,
%	$\theta\in\mathbb{R}^{p\times n}$ is a constant matrix of unknown parameters. To obtain
%	an error signal, the system in (\ref{eq: NLSysNoState})
%	is expressed in the form
%	\begin{equation}
%	\dot{x}=f^{o}\left(x,u\right)+\theta^{T}\sigma\left(x,u\right)+\epsilon\left(x,u\right).\label{eq:NLSPre Integral FormNoState}
%	\end{equation}
	
	The agent under observation is using the policy which minimizes the following performance index
	\begin{equation}
	J(x_0,u(\cdot)) = \int_{T_0}^{\infty}r(x(t;x_0,u(\cdot)),u(t))\dif t,\label{eq:reward}
	\end{equation}
	where $x(\cdot;x_0,u(\cdot))$ is the trajectory of the agent generated by the optimal control signal $u(\cdot)$ that minimizes the performance index in \eqref{eq:reward} starting from the initial condition $x_0$ and beginning at time $T_0$. The main objective of the paper is to estimate the unknown reward function, $r$, using input-state pairs.
	
	The following assumptions are used throughout the rest of this paper.
	\begin{assumption} \label{assume: reward}
		The unknown reward function $r$ is quadratic in the control, i.e.,
		\begin{equation}
		r(x,u) = Q(x)+u^TRu,
		\end{equation}
		where $R\in \mathbb{R}^{m\times m}$ is a positive definite matrix, such that $R=\mathrm{diag}([r_1,\cdots,r_m])$.
	\end{assumption} 
	\cite{SCC.Hornik.Stinchcombe.ea1985,SCC.Hornik1991} The continuous function $Q$ can be represented using a neural network as $Q(x) =(W^*_Q)^T\sigma_{Q}(x)+\epsilon_Q(x)$, where $ W_{Q}^{*}\coloneqq\left[q_{1},\cdots,q_{L}\right]^{T} $ are ideal reward function weights, $\sigma_{Q}: \mathbb{R}^{n} \to \mathbb{R}^L$ are known continuously differentiable features, and $\epsilon_Q:\mathbb{R}^n\to \mathbb{R}$ is the approximation error.
	\begin{assumption}\label{assume: affine}
		The dynamics for the agent are affine in control and can be expressed as
		\begin{equation}
		\dot{x}=f^o(x,u)+\theta^T\sigma(x,u)+\epsilon(x,u), \label{eq:NLSPre Integral Form}
		\end{equation}
		where $f^o:\mathbb{R}^n\times\mathbb{R}^m\rightarrow \mathbb{R}^n$ denotes the continuously differentiable nominal dynamics, $\theta^T\sigma$ is a parameterized estimate of the uncertain part of the dynamics, where $\theta \in \mathbb{R}^{p\times n}$ is a matrix of unknown constant parameters and $\sigma:\mathbb{R}^{n}\times \mathbb{R}^m\to \mathbb{R}^p$ are known continuously differentiable features, and $\epsilon:\mathbb{R}^n\times\mathbb{R}^m\to\mathbb{R}^n$ denotes the function approximation error.
	\end{assumption}

	Under the premise that the observed agent makes optimal decisions, the state and control trajectories, $x(\cdot)$ and $u(\cdot)$, satisfy the Hamilton-Jacobi-Bellman (HJB) \cite{SCC.Liberzon2012} equation
	\begin{equation}
	H\left(x\left(t\right),\Big(\left[\nabla_{x}V^{*}\right]\left(x\left(t\right)\right)\Big)^{T},u\left(t\right)\right)=0,\forall t\in \mathbb{R}_{\geq T_0},\label{eq:inverse HJB}
	%u^{*}\left(x\right)\triangleq-\frac{1}{2}R^{-1}g^{T}\left(x\right)\nabla_{x}\left(V^{*}\right)^{T}\left(x\right),
	\end{equation}
	where the unknown optimal value function is $ V^{*}:\mathbb{R}^{n}\to \mathbb{R} $ and $H:\mathbb{R}^n\times \mathbb{R}^n \times \mathbb{R}^m \to \mathbb{R}$ is the Hamiltonian, defined as $H(x,p,u):=p^Tf(x,u)+r(x,u)$. The goal of IRL is to estimate the reward function, $ r $. 
	
	To aid in the estimation of the reward function, let $ \hat{V}:\mathbb{R}^{n}\times \mathbb{R}^{P} \to \mathbb{R}$, $ \left(x,\hat{W}_{V}\right)\mapsto \hat{W}_{V}^{T}\sigma_{V}\left(x\right)$ be a parameterized estimate of the optimal value function $V^{*}$, where $ \hat{W}_{V}\in \mathbb{R}^{P} $ are the estimates of the ideal value function weights $W_V^*$ and $ \sigma_{V}:\mathbb{R}^{n}\to\mathbb{R}^{P} $ are known continuously differentiable features. Let $\epsilon_V:\mathbb{R}^n\to\mathbb{R}$, defined as $\epsilon_V(x)=V^*(x)-\left(W_V^*\right)^T\sigma_V(x)$, be the resulting approximation error. Using $ \hat{W}_{V} $, $ \hat{W}_{Q} $, and $ \hat{W}_{R} $, which are the estimates of $ W_{V}^{*} $, $ W_{Q}^{*} $, and $ W_{R}^{*}\coloneqq\left[r_{1},\cdots,r_{m}\right]^{T} $, respectively, in \eqref{eq:inverse HJB}, the inverse Bellman error $ \delta:\mathbb{R}^{n}\times \mathbb{R}^{m}\times\mathbb{R}^{L+P+m}\to\mathbb{R} $ is obtained as
	\begin{align}
	\delta\left({x},u,\hat{W}\right)=&\hat{W}_{V}^{T}\Big(\left[\nabla_{x}\sigma_{V}\right]\left({x}\right) \Big) f(x,u)+\hat{W}_{Q}^{T}\sigma_{Q}\left({x}\right)\nonumber\\
	&+\hat{W}_{R}^{T}\sigma_{u}\left(u\right),\label{eq: IBE}
	\end{align}where $ \sigma_{u}\left(u\right)\coloneqq\left[u_{1}^{2},\cdots,u_{m}^{2}\right]$.
	
	For brevity of presentation, it is assumed that a parameter estimator that satisfies the following properties is available. For examples of such parameter estimates, see \cite{SCC.Kamalapurkar2017,SCC.Self.Harlan.ea2019a}.
	\begin{assumption}\label{assume: StateParamUUB}
		\cite[Assumption 2]{SCC.Kamalapurkar.Walters.ea2016} A compact set $\Theta\subset\mathbb{R}^p$ such that $\theta\in\Theta$ is known a priori. The estimate $\hat{\theta}:\mathbb{R}_{\geq T_0}\to\mathbb{R}^p$ are updated based on a switched update law of the form
		\begin{equation*}
			\dot{\hat{\theta}}=f_{\theta_s}(\hat{\theta}(t),t),
		\end{equation*}
		$\hat{\theta}(T_0)=\hat{\theta}_0\in\Theta,$ where $s\in\mathbb{N}$ denotes the switching index and $\{f_{\theta_s}:\mathbb{R}^p\times\mathbb{R}_{\geq T_0}\to\mathbb{R}^p\}_{s\in\mathbb{N}}$ denotes the family of continuously differentiable functions. The dynamics of the parameter estimation error $\tilde{\theta}:\mathbb{R}_{\geq T_0}\to\mathbb{R}^p$, defined as $\tilde{\theta}(t):=\theta-\hat{\theta}(t)$, can be expressed as $\dot{\hat{\theta}}(t)=f_{\theta_s}\left(\theta-\tilde{\theta}(t),t\right)$. Furthermore, there exists a continuously differentiable function $V_\theta:\mathbb{R}^p\times\mathbb{R}_{\geq T_0} \to\mathbb{R}_{\geq0}$ that satisfies 
		\begin{equation*}
			\underline{\nu}_\theta\left(\left\Vert\tilde{\theta}\right\Vert\right)\leq V_\theta\left(\tilde{\theta},t\right)\leq\overline{\nu}_\theta\left(\left\Vert\tilde{\theta}\right\Vert\right),
		\end{equation*}
		and
		\begin{multline*}
			\left(\left[\nabla_{\tilde{\theta}}V_\theta\right]\left(\tilde{\theta},t\right)\right)\left(-f_{\theta_s}\left(\theta-\tilde{\theta},t\right)\right)+\frac{\partial V_\theta\left(\tilde{\theta},t\right)}{\partial t}\\\leq-K\left\Vert\tilde{\theta}\right\Vert^2+D\left\Vert\tilde{\theta}\right\Vert,
		\end{multline*}
		for all $s\in\mathbb{N},t\in\mathbb{R}_{\geq T_0},$ and $\tilde{\theta}\in\mathbb{R}^p$, where $\underline{\nu}_\theta,\overline{\nu}_\theta:\mathbb{R}_{\geq 0}\to\mathbb{R}_{\geq 0}$ are class $\mathcal{K}$ functions, $K\in\mathbb{R}_{>0}$ is an adjustable parameter, and $D\in\mathbb{R}_{>0}$ is a positive constant.
	\end{assumption}
	Utilizing parameter estimates from Assumption \ref{assume: StateParamUUB}, \eqref{eq: IBE} can be updated and expressed as
	\begin{align}
	\delta^{\prime}\left({x},u,\hat{W},\hat{\theta}\right)\!=&\hat{W}_{V}^{T}\Big(\!\left[\nabla_{x}\sigma_{V}\right]\left({x}\right) \Big)  \hat{Y}(x,u,\hat{\theta})+\hat{W}_{Q}^{T}\sigma_{Q}\left({x}\right)\nonumber\\
	&+\hat{W}_{R}^{T}\sigma_{u}\left(u\right),\label{eq: IBE_Prime}
	\end{align} where $\hat{Y}(x,u,\hat{\theta}):=f^o(x,u)+\hat{\theta}^T\sigma(x,u)$ and $\hat{\theta}$ are estimates of unknown parameters. Rearranging, \eqref{eq: IBE_Prime} becomes \begin{equation}
	\delta^{\prime}\left({x},u,\hat{W}^{\prime},\hat{\theta}\right)=\left(\hat{W}^{\prime}\right)^{T}\sigma^{\prime}\left({x},u,\hat{\theta}\right),\label{eq:inverse BE}
	\end{equation}where $ \hat{W}^{\prime} \coloneqq \left[\hat{W}_{V};\hat{W}_{Q};\hat{W}_{R}\right] $ and $ \sigma^{\prime}\left({x},u,\hat{\theta}\right)\coloneqq\left[\Big(\left[\nabla_{x}\sigma_{V}\right]\left({x}\right) \Big)\hat{Y}(x,u,\hat{\theta});\sigma_{Q}\left({x}\right);\sigma_{u}\left(u\right)\right] $.
	
	In the following, the parameter estimator is executed synchronously in with IRL and in real-time.

	\section{Optimal Controller Estimation}\label{sec: Estimate Control}
	Since a large majority of optimal control problems are aimed at driving the state to a set-point or an error signal to zero, information content of the state and control trajectories can quickly decay to zero rendering them unable to provide usable data. More specifically, once the states converge, newer data points from the agent's trajectories will simply provide zero, or near-zero, values for both the states (or errors) and the controls. As a result, the reward function estimate may never converge. Motivated by the observation that knowledge of the optimal controller can be leveraged to artificially create additional data to drive IRL, this section develops a process for finding an estimate of the optimal controller. 
	
	\subsection{Controller Estimation Formulation}
	Provided Assumptions \ref{assume: reward} and \ref{assume: affine} are satisfied, the closed-form nonlinear optimal controller corresponding to the reward structure in (\ref{eq:reward}) is\begin{equation}\label{optimal controller}
	u^*(x)=-\frac{1}{2}R^{-1}\Big(\left[\nabla_uf\right](x)\Big)^T\Big(\left[\nabla_{x}V^*\right](x)\Big)^T,
	\end{equation}
	where $u^*:=[u_1,u_2,\cdots,u_m]^T$ and $\Big(\left[\nabla_uf\right]\left(x\right)\Big)$ is found from $f(x,u)$ in \eqref{eq:Nonlinear Problem FormulationNoState}. To promote estimation, $u^*$ will be represented as
	\begin{equation}
	u^*(x)=-\left({W}_{u}^*\right)^{T}\sigma_{u}\left(x\right)+\epsilon_u(x),\label{optimal controller estimated}
	\end{equation}
	where $W^*_u\in\mathbb{R}^{K\times m}$ is a matrix of unknown ideal constant parameters, $ \sigma_{u}:\mathbb{R}^{n}\to\mathbb{R}^{K} $ are known continuously differentiable features, and $\epsilon_u:\mathbb{R}^n\to\mathbb{R}^m$ is the resulting approximation error.
	
	Collecting state and control signals over time instances, $t_1,t_2,\cdots,t_M$, stored in a history stack, denoted as $\mathcal{H}^{u}$, \eqref{optimal controller estimated} can be formulated into the matrix form
	\begin{equation}
	-\Sigma_u-\Sigma_{\sigma}\hat{W}_u={\Sigma}_{\sigma}\tilde{W}_u-{\Delta}_u,\label{optimal controller matrix}
	\end{equation}
	where $\Sigma_u:=[u^T(t_1);u^T(t_2);\cdots;u^T(t_M)]$, ${\Sigma}_{\sigma}:=[{\sigma}^T_u({x}(t_1));{\sigma}^T_u({x}(t_2));\cdots;{\sigma}^T_u({x}(t_M))]$, and ${\Delta}_u:=[{\epsilon}^T_u({x}(t_1));{\epsilon}^T_u({x}(t_2));\cdots;{\epsilon}^T_u({x}(t_M))]$. The weight estimation error is defined as $\tilde{W}_u=W_u^*-\hat{W}_u$, where $\hat{W}_u$ is the estimate of $W^*_u$.
	
	Using \eqref{optimal controller matrix}, a recursive least-squares update law to estimate the unknown weights is designed as 
	\begin{equation}
	\dot{\hat{W}}_u=\alpha_u\Gamma_u {\Sigma}_{\sigma}^T\left(-\Sigma_u-{\Sigma}_{\sigma}\hat{W}_u\right).\label{eq: WK Error Dyns}
	\end{equation}
	where $\alpha_u\in\mathbb{R}_{>0}$ is a constant adaptation gain, and $\Gamma_u:\mathbb{R}_{\geq0}\to\mathbb{R}^{K\times K}$ is the least-squares gain updated using the update law
	\begin{equation}
	\dot{\Gamma}_u=\beta_u\Gamma_u-\alpha_u\Gamma_u{\Sigma}_{\sigma}^T{\Sigma}_\sigma\Gamma_u.\label{eq: Gamma WK}
	\end{equation}
	where $\beta_u\in \mathbb{R}_{>0}$ is the forgetting factor.
	
	\subsection{Analysis}
	The time-varying history stack, $\mathcal{H}^{u}$, is called full rank, uniformly in $t$, if there exists a $\underline{k}>0$ such that $\forall t\in\mathbb{R}_{\geq T_0}$,
	\begin{equation}
	0<\underline{k}<\lambda_{\min}\left\{{\Sigma}_\sigma^T(t){\Sigma}_\sigma(t)\right\}.\label{def: KFull Rank}
	\end{equation}
	
	Using arguments similar to \cite[Corollary 4.3.2]{SCC.Ioannou.Sun1996}, it can be shown that if $\lambda_{\min}\left\{ \Gamma_u^{-1}\left(0\right)\right\} >0$, and if $\mathcal{H}^{u}$ is full rank, uniformly in $t$, then the least squares gain matrix satisfies
	\begin{equation}
	\underline{\Gamma}_u\text{I}_{K}\leq\Gamma_u\left(t\right)\leq\overline{\Gamma}_u\text{I}_{K},\label{eq:KGammaBound}
	\end{equation}
	where $\underline{\Gamma}_u$ and $\overline{\Gamma}_u$ are positive
	constants.%, and $\text{I}_{n}$ denotes an $n\times n$ identity matrix.
	
	To facilitate the following analysis, using \eqref{optimal controller matrix} and \eqref{eq: WK Error Dyns}, the dynamics for the weight estimation error can be described by
	\begin{equation}
	\dot{\tilde{W}}_u=-\alpha_u\Gamma_u{\Sigma}_\sigma^T\left({\Sigma}_\sigma\tilde{W}_u-\Delta_u\right).\label{eq:Wkerrordyns}
	\end{equation} 
	
	\begin{thm}
		If $\mathcal{H}^{u}$ is full rank, uniformly in $t$, then $t \mapsto \tilde{W}_u\left(t\right) $ is uniformly ultimately bounded.
	\end{thm}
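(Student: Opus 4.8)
The plan is to establish uniform ultimate boundedness through a Lyapunov argument anchored on the least-squares gain matrix. Since $\tilde{W}_u$ is matrix-valued, I would take the candidate $V_u(\tilde{W}_u,t)=\frac{1}{2}\mathrm{tr}\left(\tilde{W}_u^T\Gamma_u^{-1}\tilde{W}_u\right)$. The gain bound \eqref{eq:KGammaBound}, which holds precisely because $\mathcal{H}^{u}$ is full rank uniformly in $t$, immediately sandwiches the candidate as $\frac{1}{2\overline{\Gamma}_u}\left\Vert\tilde{W}_u\right\Vert^2\leq V_u\leq\frac{1}{2\underline{\Gamma}_u}\left\Vert\tilde{W}_u\right\Vert^2$, so that $V_u$ is positive definite and decrescent in $\tilde{W}_u$.

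Next I would differentiate $V_u$ along the error dynamics \eqref{eq:Wkerrordyns}. The only delicate ingredient is the time derivative of the inverse gain, for which I use $\frac{\mathrm{d}}{\mathrm{d}t}\Gamma_u^{-1}=-\Gamma_u^{-1}\dot{\Gamma}_u\Gamma_u^{-1}$ together with \eqref{eq: Gamma WK} to obtain $\frac{\mathrm{d}}{\mathrm{d}t}\Gamma_u^{-1}=-\beta_u\Gamma_u^{-1}+\alpha_u\Sigma_\sigma^T\Sigma_\sigma$. Substituting \eqref{eq:Wkerrordyns}, the $\Gamma_u$ and $\Gamma_u^{-1}$ factors cancel in the first term to give $-\alpha_u\mathrm{tr}\left(\tilde{W}_u^T\Sigma_\sigma^T\Sigma_\sigma\tilde{W}_u\right)+\alpha_u\mathrm{tr}\left(\Delta_u^T\Sigma_\sigma\tilde{W}_u\right)$, while the derivative of $\Gamma_u^{-1}$ contributes $-\frac{\beta_u}{2}\mathrm{tr}\left(\tilde{W}_u^T\Gamma_u^{-1}\tilde{W}_u\right)+\frac{\alpha_u}{2}\mathrm{tr}\left(\tilde{W}_u^T\Sigma_\sigma^T\Sigma_\sigma\tilde{W}_u\right)$. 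Collecting terms leaves $\dot{V}_u=-\frac{\alpha_u}{2}\mathrm{tr}\left(\tilde{W}_u^T\Sigma_\sigma^T\Sigma_\sigma\tilde{W}_u\right)+\alpha_u\mathrm{tr}\left(\Delta_u^T\Sigma_\sigma\tilde{W}_u\right)-\frac{\beta_u}{2}\mathrm{tr}\left(\tilde{W}_u^T\Gamma_u^{-1}\tilde{W}_u\right)$, where the last term is nonpositive and may be discarded.

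It then remains to dominate the indefinite cross term. I would invoke the full-rank bound \eqref{def: KFull Rank} to get $\mathrm{tr}\left(\tilde{W}_u^T\Sigma_\sigma^T\Sigma_\sigma\tilde{W}_u\right)\geq\underline{k}\left\Vert\tilde{W}_u\right\Vert^2$, and bound the cross term by Young's inequality as $\alpha_u\mathrm{tr}\left(\Delta_u^T\Sigma_\sigma\tilde{W}_u\right)\leq\frac{\alpha_u\underline{k}}{4}\left\Vert\tilde{W}_u\right\Vert^2+\frac{\alpha_u}{\underline{k}}\overline{\sigma}^2\overline{\Delta}^2$, where $\overline{\sigma}$ and $\overline{\Delta}$ are uniform bounds on $\left\Vert\Sigma_\sigma\right\Vert$ and $\left\Vert\Delta_u\right\Vert$. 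This produces $\dot{V}_u\leq-\frac{\alpha_u\underline{k}}{4}\left\Vert\tilde{W}_u\right\Vert^2+\frac{\alpha_u}{\underline{k}}\overline{\sigma}^2\overline{\Delta}^2$, and using the lower sandwich bound to write $\left\Vert\tilde{W}_u\right\Vert^2\geq2\underline{\Gamma}_uV_u$ gives the exponentially-decaying-plus-bounded-disturbance inequality $\dot{V}_u\leq-\frac{\alpha_u\underline{k}\underline{\Gamma}_u}{2}V_u+\frac{\alpha_u}{\underline{k}}\overline{\sigma}^2\overline{\Delta}^2$. A standard uniform ultimate boundedness theorem, or equivalently the comparison lemma, then concludes that $t\mapsto\tilde{W}_u(t)$ is uniformly ultimately bounded.

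The main obstacle I anticipate is twofold. First, one must correctly handle the $+\alpha_u\Sigma_\sigma^T\Sigma_\sigma$ contribution arising from $\frac{\mathrm{d}}{\mathrm{d}t}\Gamma_u^{-1}$, which consumes exactly half of the quadratic decay term yet, thanks to the \emph{uniform} full-rank condition, still leaves a strictly negative definite remainder. Second, one must justify the uniform bounds $\overline{\sigma}$ and $\overline{\Delta}$ on the regressor stack $\Sigma_\sigma$ and the approximation-error stack $\Delta_u$, which requires the stored samples of the state trajectory, and hence the continuous features $\sigma_u$ and error $\epsilon_u$, to remain within a compact set.
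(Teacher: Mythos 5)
Your proposal is correct and follows essentially the same route as the paper: the trace-form Lyapunov function $\mathrm{tr}\bigl(\tilde{W}_u^T\Gamma_u^{-1}\tilde{W}_u\bigr)$ sandwiched via \eqref{eq:KGammaBound}, differentiation using $\dot{\Gamma}_u^{-1}=-\Gamma_u^{-1}\dot{\Gamma}_u\Gamma_u^{-1}$ with \eqref{eq: Gamma WK} and \eqref{eq:Wkerrordyns}, the uniform full-rank bound \eqref{def: KFull Rank} plus Young's inequality on the cross term, and the comparison lemma to conclude uniform ultimate boundedness. The only differences are cosmetic: your factor of $\nicefrac{1}{2}$ in the candidate, and your discarding of the nonpositive $\beta_u$ term, which the paper instead retains inside its decay constant $A$ to obtain a slightly tighter ultimate bound.
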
	
	\begin{proof}
		Consider the following positive definite candidate Lyapunov function 
		\begin{equation}
		\label{eqn:V_estimation}
		V_u(\tilde{W}_u,t) = \text{tr}(\tilde{W}_u^T \Gamma_u^{-1}(t) \tilde{W}_u),
		\end{equation}
		Using the bounds in \eqref{eq:KGammaBound}, the candidate Lyapunov function satisfies 
		\begin{equation}
		\frac{1}{\overline{\Gamma}_u}\left\Vert \tilde{W}_u\right\Vert ^{2}\leq V_u\left(\tilde{W}_u,t\right)\leq\frac{1}{\underline{\Gamma}_u}\left\Vert \tilde{W}_u\right\Vert ^{2}.
		\end{equation}
		Taking the time derivative of (\ref{eqn:V_estimation}), and using \eqref{eq: Gamma WK} and \eqref{eq:Wkerrordyns}, along with the identity $\dot{\Gamma}_u^{-1}=-\Gamma_u^{-1}\dot{\Gamma}_u\Gamma_u^{-1}$, after simplifying yields
		%	\begin{equation}
		%	\label{eqn:Dot_V_estimation}
		%	\dot{V}_K(\tilde{W}_K,t) = \ 2 \tilde{W}_K^T \Gamma^{-1}(t) \dot{\tilde{W}} + \tilde{W}^T \dot{\Gamma}^{-1}(t) {\tilde{W}}. \end{equation}
		%	
%		\begin{multline}
%		\!\!\!\!\!\!\dot{V}_K(\tilde{W}_K,t)= 2 \tilde{W}_K^T \Gamma^{-1}_K(t) \Big( - \alpha_K\Gamma_K(t) \hat{\Sigma}^T_{K}\big(\hat{\Sigma}_K\tilde{W}_K-\Delta_E\big)\!\Big) \\- \tilde{W}_K^T \bigg( \Gamma_K^{-1}(t) \Big[ \beta_K \Gamma_K(t) - \alpha_K \Gamma_K(t)   \hat{\Sigma}^T_{K}\hat{\Sigma}_K \Gamma_K(t)  \Big] \\\Gamma_K^{-1}(t)  \bigg)  {\tilde{W}}_K. 
%		\end{multline}
		%	\begin{multline}
		%	\dot{V}_K(\tilde{W}_K,t)= - 2\alpha \tilde{W}^T  \hat{\Sigma}_K^T\hat{\Sigma}_K\tilde{W}_K+2\alpha\tilde{W}^T_K\hat{\Sigma}^T\Sigma_E 
		%	\\- \beta\tilde{W}^T \Gamma^{-1}(t)    {\tilde{W}} + \alpha \tilde{W}^T  \hat{\Sigma}_K^T\hat{\Sigma}_K  {\tilde{W}} 
		%	\end{multline}
%		Simplifying, 
		\begin{multline}
		\dot{V}_u(\tilde{W}_u,t)= -\alpha_u\text{tr}( \tilde{W}_u^T  {\Sigma}_\sigma^T{\Sigma}_\sigma\tilde{W}_u)\\+2\alpha_u\text{tr}(\tilde{W}^T_u{\Sigma}_\sigma^T\Delta_u) 
		-\beta_u\text{tr}( \tilde{W}_u^T \Gamma_u^{-1}(t)    {\tilde{W}}_u).
		\end{multline}
		Using the Cauchy-Schwartz inequality, and bounds in (\ref{def: KFull Rank}) and (\ref{eq:KGammaBound}), $\dot{V}_u$ can be bounded by 
		\begin{multline}
		\label{eq:Dot_V_estimation_bound2_theta}
		\dot{V}_u(\tilde{W}_u,t)\leq -\left(\alpha_u\underline{k} + \frac{\beta_u}{\overline{\Gamma}_u}\right)\left\Vert\tilde{W}_u\right\Vert^2 \\+2\alpha_u\left\Vert\tilde{W}_u\right\Vert\left\Vert {\Sigma}_\sigma\right\Vert\left\Vert\Delta_u\right\Vert.
		\end{multline}
		
		Since the states and controls are both bounded, $\left\Vert {\Sigma}_\sigma\right\Vert$ and $\Vert\Delta_{u}\Vert$ are bounded above. The upper bounds are defined as $
		\overline{\Sigma}_\sigma$ and
		$\overline{\Delta}_u.$
		Using these upper bounds and Young's Inequality, $\dot{V}_u$ becomes
		\begin{equation}
		\label{eq: Differential V}
		\dot{V}_u(\tilde{W}_u,t)\leq -AV_u\left(\tilde{W}_u,t\right) +B,
		\end{equation}
		where $A$ and $B$ are defined as
		\begin{equation}
		A:=\frac{\underline{\Gamma}_u}{2}\left(\alpha_{u}\underline{k} + \frac{\beta_u}{\overline{\Gamma}_u}\right),
		\end{equation}
		and
		\begin{equation}
		B:=\frac{2(\alpha_u\overline{\Sigma}_\sigma\overline{\Delta}_u)^2}{(\alpha_{u}\underline{k}+\nicefrac{\beta_{u}}{\overline{\Gamma}_u})}.
		\end{equation}
		Finding the solution of $\eqref{eq: Differential V}$ yields
		\begin{equation}\label{eq: V Equation}
		{V}_{u}(t)\leq {V}_{u_0}e^{-A\left(t-T_0\right)}+\frac{B}{A},
		\end{equation}
		where ${V}_{u_0}\geq\left\Vert V_u\left(\tilde{W}_u\left(T_{0}\right),T_{0}\right)\right\Vert$. It can be concluded that \begin{equation}
		\lim\limits_{t\to\infty}{V}_u(t)\leq\frac{B}{A}.\end{equation} It can further be concluded that $\tilde{W}_u$ decays exponentially, such that \begin{equation}\lim\limits_{t\to\infty}\left\Vert\tilde{W}_u\left(t\right)\right\Vert\leq \sqrt{\overline{\Gamma}_u\frac{{B}}{A}}.\end{equation}
	\end{proof}
	
	\section{Inverse Reinforcement Learning}\label{sec:IRLNoState}
	In this section, the optimal feedback estimator developed in this previous section is utilized to create a data-set of estimated near-optimal state-action pairs to drive IRL.	
	
	\subsection{Utilizing Control and Parameter Estimates}
	Consider a time instance, $t_i$. For each time $t_i$, select an arbitrary state, denoted by $x_i$, and let $\hat{u}_i:=\hat{W}_u^T(t_i)\sigma_u(x_i)$ be the estimate of the optimal controller $u_i^*$ at state $x_i$ and $t_i$. The updated inverse Bellman error, when evaluated at the arbitrarily selected state and at time $t_i$ using the estimates of the model and the optimal controller, is given by
	\begin{equation}
	\!\delta^{\prime \prime}\!\left(t_i,{x}_i,\hat{u}_i,\hat{W}^{\prime}(t_i),\hat{\theta}(t_i)\right)\!=\!\left(\hat{W}^{\prime}(t_i)\right)^{T}\!\!\sigma^{\prime}\!\left(t_i,{x}_i,\hat{u}_i,\hat{\theta}(t_i)\right)\!,\label{eq:inverse BE Control Estimates}
	\end{equation}where 
	\begin{equation*}
		 \hat{W}^{\prime}(t_i) \coloneqq \left[\hat{W}_{V}(t_i);\hat{W}_{Q}(t_i);\hat{W}_{R}(t_i)\right]
	\end{equation*}
	and \begin{multline*}
	\sigma^{\prime}\left(t_i,{x}_i,\hat{u}_i,\hat{\theta}(t_i)\right)\coloneqq\Big[\Big(\left[\nabla_{x}\sigma_{V}\right]\left({x}_i\right)\Big)(f^o(x_i,\hat{u}_i)\\+\hat{\theta}^T(t_i)\sigma(x_i,\hat{u}_i));\sigma_{Q}\left({x}_i\right);\sigma_{u}\left(\hat{u}_i\right)\Big]. 
	\end{multline*} 
	
%	Now consider a set of time instances $\{t_i\}_{i=1}^{N}$. 
%	Using \eqref{eq:inverse BE Control Estimates}, parameter estimates, and state and control estimate pairs, stored in a history stack, denoted as $\mathcal{H}^{IRL},$ for time instances $t_1,t_2,\ldots,t_N,$ the inverse Bellman error can be formulated into the matrix form 
%	\begin{equation}
%	\Delta^{\prime} = \hat{\Sigma}^{\prime} \hat{W}^{\prime},\label{homogeneous}
%	\end{equation}where 
%	\begin{multline*} \Delta^{\prime}\coloneqq\Big[ \delta^{\prime\prime}\left(t_1,{x}_1,\hat{u}_1,\hat{W}^{\prime}(t_1),\hat{\theta}(t_1)\right);\cdots;\\ \delta^{\prime\prime}\left(t_N,{x}_N,\hat{u}_N,\hat{W}^{\prime}\left(t_N\right),\hat{\theta}\left(t_N\right)\right)\Big], \end{multline*}
%	and
%	\begin{multline*} 
%	\hat{\Sigma}^{\prime}\coloneqq\Big[\left(\sigma^{\prime}\left(t_1,{x}_1,\hat{u}_1,\hat{\theta}(t_1)\right)\right)^T;\cdots;\\\left(\sigma^{\prime}\left(t_N,{x}_N,\hat{u}_N,\hat{\theta}(t_N)\right)\right)^T\Big].
%	\end{multline*}
%	
%	Since the HJB equation in \eqref{eq:inverse HJB} is equal to zero, utilizing the current estimates for $\hat{u}$ and $\hat{\theta}$ in $\hat{\Sigma}^\prime$, candidate solutions for $\hat{W}^\prime$ can be obtained by minimizing $\Vert\Delta'\Vert$ in \eqref{homogeneous}. It can be seen that the solution $\hat{W}^\prime=0$ trivially minimizes $\Vert\Delta^\prime\Vert$, yet this solution is undesirable as it does not provide any useful information and should be discarded. To remove this ambiguity, one reward weight will be assigned a fixed known value. 

	Since all positive multiples of a reward function result in the same optimal controller, given state-action pairs, the reward function can only be identified up to a scale. As a result, one of the reward function weights can be arbitrarily assigned. 
	
	Since optimal control behaviors are scale-invariant, there is no loss of generality in resolving the scale ambiguity by taking the first element of $ \hat{W}_{R} $ to be known. The inverse BE in \eqref{eq:inverse BE Control Estimates} can then be expressed as \begin{multline}
	\!\!\!\!\!\delta^{\prime \prime}\!\left(t_i,{x}_i,\hat{u}_i,\hat{W}(t_i),\hat{\theta}(t_i)\right)=\left(\hat{W}(t_i)\right)^{T}\!\!\sigma^{\prime\prime}\!\left(t_i,{x}_i,\hat{u}_i,\hat{\theta}(t_i)\right) \\+ r_{1}\sigma_{u1}\left(\hat{u}_i\right),\label{eq: Deltapp}
	\end{multline}where $ \hat{W}(t_i) \coloneqq \left[\hat{W}_{V}(t_i);\hat{W}_{Q}(t_i);\hat{W}_{R}^{-}(t_i)\right],$ the vector $ \hat{W}_{R}^{-} $ denotes $ \hat{W}_{R} $ with the first element removed, $\sigma_{uj}\left(\hat{u}_i\right) $ denotes the $ j $th element of the vector $ \sigma_{u}\left(\hat{u}_i\right) $, the vector $ \sigma_{u}^{-} $ denotes $ \sigma_{u} $ with the first element removed, and 
	\begin{multline}
		\sigma^{\prime\prime}\left(t_i,{x}_i,\hat{u}_i,\hat{\theta}(t_i)\right)\coloneqq\Big[\Big(\left[\nabla_{x}\sigma_{V}\right]\!\left({x}_i\right)\Big)(f^o(x_i,\hat{u}_i)\\+\hat{\theta}^T(t_i)\sigma(x_i,\hat{u}_i));\sigma_{Q}\left({x}_i\right);\sigma_{u}^{-}\left(\hat{u}_i\right)\Big] .
	\end{multline}
	
	The closed-form nonlinear optimal controller corresponding to the reward structure in (\ref{eq:reward}) provides the relationship\begin{multline}
	-2R{u}^*\left(x_i\right)=\Big(\left[\nabla_uf\right]\left(x_i\right)\Big)^T\Big(\left[\nabla_{x}\sigma_{V}\right]\left(x_i\right)\Big)^TW_{V}^{*}\\+\Big(\left[\nabla_uf\right]\left(x_i\right)\Big)^T\Big(\left[\nabla_{x}\epsilon_V\right]\left(x_i\right)\Big)^T.\label{eq: Optimal Control}
	\end{multline}
	
	Utilizing estimates $\hat{\theta}(t_i)$ and data pairs $\left(x_i,\hat{u}_i\right)$ in \eqref{eq: Optimal Control}, subtracting $H\left(x_i,\Big(\left[\nabla_xV\right](x_i)\Big),u^*(x_i)\right)$ from \eqref{eq: Deltapp}, evaluating \eqref{eq: Deltapp} and \eqref{eq: Optimal Control} at time instances $\{t_i\}_{i=1}^N$, and stacking the results in a matrix form, we get
	\begin{equation}
	-\hat{\Sigma}\hat{W}-\hat{\Sigma}_{u1}=\hat{\Sigma}\tilde{W}-\Delta, \label{eq: Weight Linear System}
	\end{equation}where the weight estimation error is defined as $\tilde{W}=W^*-\hat{W}$, and $\hat{W}$ is the estimate of $W^*$, and
	\begin{align*}
	&\hat{\Sigma}\coloneqq\!\left[\sigma^T\!\left(t_1,{x}_1,\hat{u}_1,\hat{\theta}\left(t_1\right)\right)\!;\cdots;\sigma^T\!\left(t_N,{x}_N,\hat{u}_N,\hat{\theta}\left(t_N\right)\right)\right]\!\!,\\ &\hat{\Sigma}_{u1}\coloneqq\left[\sigma_{u1}^{\prime}\left(\hat{u}_1\right);\cdots;\sigma_{u1}^{\prime}\left(\hat{u}_N\right)\right],\\
	&\Delta:=\Big[\Delta_\delta(t_1);\Delta_m(t_1);\cdots;\Delta_\delta(t_N);\Delta_m(t_N)\Big],
	\end{align*}
	where
	\begin{equation*}
		\sigma_{u1}^{\prime}(\hat{u}_i)\coloneqq \left[r_1\sigma_{u1}\left(\hat{u}_{1i}\right);2r_{1}\hat{u}_{1i};0_{\left(m-1\right)\times 1}\right],
	\end{equation*}
	\begin{equation*} \sigma\coloneqq\!\left[\sigma^{\prime\prime} \!\begin{bmatrix}
		G\\\left[0_{m\times L}, \,\,\begin{bmatrix}
		0_{1\times m-1}\\2\text{diag}\left(\left[\hat{u}_{2i},\cdots,\hat{u}_{mi}\right]\right)
		\end{bmatrix}\right]^{T}
		\end{bmatrix}\right],
		\end{equation*}
		\begin{equation*}
			G:=\!\Big(\!\left[\nabla_{x}\sigma_V\right](x_i)\Big)\Big(\!\Big(\!\left[\nabla_uf^o\right](x_i)\Big)+\hat{\theta}^T(t_i)\Big(\!\left[\nabla_u\sigma\right](x_i)\Big)\!\Big),
		\end{equation*}
		\begin{multline*}
			\!\!\!\!\!\Delta_\delta(t_i):=2R\tilde{u}_i+\Big(\!\left[\nabla_u\sigma\right](x_i)\Big)^T\tilde{\theta}(t_i)\Big(\!\left[\nabla_u\sigma_V\right](x_i)\Big)^TW_V^*\\+\left(\!\Big(\left[\nabla_uf^o\right](x_i)\!\Big)\!\!+\theta^T(t_i)\Big(\!\left[\nabla_u\sigma\right](x_i)\Big)\!\right)^T\!\Big(\!\left[\nabla_x\epsilon_V\right](x_i)\!\Big)^T\\+\Big(\left[\nabla_u\epsilon\right](x_i,u^*_i)\Big)\Big(\left[\nabla_x\sigma_V\right](x_i)\Big)W_V^*,
		\end{multline*}
		\begin{multline*}
			\Delta_m(t_i):=\Big(\sigma_u(u_i^*)-\sigma_u(\hat{u}_i)\Big)^TW_R^*+\epsilon_V(x_i)+\epsilon_Q(x_i)\\+\Big({f^{o}}\left(x_i,u_i^*\right)-{f^{o}}\left(x_i,\hat{u}_i\right)\Big)^T\Big(\left[\nabla_x\sigma_V\right](x_i)\Big)^TW_V^*\\+\left(\theta^T\left(\sigma(x_i,u^*_i)-\sigma(x_i,\hat{u}_i)\right)\right)^T\Big(\left[\nabla_x\sigma_V\right](x_i)\Big)^TW_V^*\\+\left(\tilde{\theta}^T(t_i)\sigma(x_i,\hat{u}_i)+\epsilon(x_i,u^*_i)\right)^T\Big(\left[\nabla_x\sigma_V\right](x_i)\Big)^TW_V^*,
		\end{multline*}
	and $\hat{u}_{ji}$ is the $j$th element of $\hat{u}_i$. 
	
	A history stack, denoted as $\mathcal{H}^{IRL}$, is a set of ordered pairs of parameter estimates, $\hat{\theta}(t_i)$, and data pairs, $(x_i,\hat{u}_i)$, collected over time instance $t_1,t_2,\ldots,t_N$ into matrices $\Big(\hat{\Sigma},\hat{\Sigma}_{u1}\Big)$.
		
	Due to the fact that $\hat{\Sigma}$ and $\Delta$ depend on the quality of the control and parameter estimates, a purging technique is incorporated in the following to remove poor estimates $\hat{u}$ and $\hat{\theta}$ from $\mathcal{H}^{IRL}.$ During the transient phase of the control and parameter estimators, the estimates $\hat{u}$ and $\hat{\theta}$ are likely to be less accurate and the resulting values of $\hat{W}$ are likely to be poor. Purging facilitates usage of better estimates as they become available.
	
	The recursive update law is then designed as
	\begin{equation}
	\dot{\hat{W}}=\alpha\Gamma\hat{\Sigma}^T\left(-\hat{\Sigma}\hat{W}-\hat{\Sigma}_{u1}\right).\label{eq:W Dynamics}
	\end{equation}
	In \eqref{eq:W Dynamics}, $\alpha\in\mathbb{R}_{>0}$ is a constant adaptation gain and $\Gamma:\mathbb{R}_{\geq0}\to\mathbb{R}^{\left(L+P+m-1\right)\times \left(L+P+m-1\right)}$ is the least-squares gain updated using the update law
	\begin{equation}
	\dot{\Gamma}=\beta\Gamma-\alpha\Gamma\hat{\Sigma}^T\hat{\Sigma}\Gamma,\label{eq:WGamma Dynamics}
	\end{equation}
	where $\beta\in\mathbb{R}_{>0}$ is the forgetting factor. 
	
	\subsection{Analysis}\label{sec:StabilityNoState}
	
	A Lyapunov based analysis is performed to show convergence for the IRL method in Section \ref{sec:IRLNoState}.
	
	The time-varying history stack, $\mathcal{H}^{IRL}$, is called full rank, uniformly in $t$, if there exists a $\underline{\sigma}>0$ such that $\forall t\in\mathbb{R}_{\geq T_0}$,
	\begin{equation}
	0<\underline{\sigma}<\lambda_{\min}\left\{\hat{\Sigma}^T(t)\hat{\Sigma}(t)\right\}.\label{def: Full Rank}
	\end{equation} 
	
	Using arguments similar to \cite[Corollary 4.3.2]{SCC.Ioannou.Sun1996}, it can be shown that if $\lambda_{\min}\left\{ \Gamma^{-1}\left(T_0\right)\right\} >0$, and if $\mathcal{H}^{IRL}$ is full rank, uniformly in $t$, then the least squares gain matrix satisfies
	\begin{equation}
	\underline{\Gamma}\text{I}_{L+P+m-1}\leq\Gamma\left(t\right)\leq\overline{\Gamma}\text{I}_{L+P+m-1},\label{eq:NLSStaFGammaBoundNoState}
	\end{equation}
	where $\underline{\Gamma}$ and $\overline{\Gamma}$ are positive
	constants.%, and $\text{I}_{n}$ denotes an $n\times n$ identity matrix.
	
	To facilitate the following Lyapunov analysis, using \eqref{eq:W Dynamics}, the dynamics for the weight estimation error can be described by
	\begin{equation}
	\dot{\tilde{W}}=-\alpha\Gamma\hat{\Sigma}^T\left(\hat{\Sigma}\tilde{W}-\Delta\right).\label{eq:Brunovsky Werrordyns}
	\end{equation} 
	
	The stability result is summarized in the following theorem.
	\begin{thm}\label{thm: Brunovsky UUB W}
		If $\mathcal{H}^{IRL}$ is full rank, uniformly in $t$, then $t \mapsto \tilde{W}\left(t\right)$ is uniformly ultimately bounded.
	\end{thm}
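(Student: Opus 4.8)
The plan is to reproduce the Lyapunov argument used for Theorem~1, now applied to the coupled dynamics \eqref{eq:Brunovsky Werrordyns} and \eqref{eq:WGamma Dynamics}. I would take the quadratic candidate
\begin{equation*}
V(\tilde{W},t)=\text{tr}\left(\tilde{W}^T\Gamma^{-1}(t)\tilde{W}\right),
\end{equation*}
and use the gain bounds in \eqref{eq:NLSStaFGammaBoundNoState} to sandwich it as $\frac{1}{\overline{\Gamma}}\|\tilde{W}\|^2\leq V\leq\frac{1}{\underline{\Gamma}}\|\tilde{W}\|^2$, so that $V$ is positive definite, decrescent, and radially unbounded in $\tilde{W}$.

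Next I would differentiate $V$ along \eqref{eq:Brunovsky Werrordyns}, substitute \eqref{eq:WGamma Dynamics} through the identity $\dot{\Gamma}^{-1}=-\Gamma^{-1}\dot{\Gamma}\Gamma^{-1}$, and simplify. Exactly as in Theorem~1 the $\alpha\hat{\Sigma}^T\hat{\Sigma}$ terms partially cancel, leaving
\begin{equation*}
\dot{V}=-\alpha\,\text{tr}\left(\tilde{W}^T\hat{\Sigma}^T\hat{\Sigma}\tilde{W}\right)+2\alpha\,\text{tr}\left(\tilde{W}^T\hat{\Sigma}^T\Delta\right)-\beta\,\text{tr}\left(\tilde{W}^T\Gamma^{-1}(t)\tilde{W}\right).
\end{equation*}
Bounding the first term below with the full-rank condition \eqref{def: Full Rank}, the third with \eqref{eq:NLSStaFGammaBoundNoState}, and the cross term with Cauchy--Schwarz yields
\begin{equation*}
\dot{V}\leq-\left(\alpha\underline{\sigma}+\frac{\beta}{\overline{\Gamma}}\right)\|\tilde{W}\|^2+2\alpha\|\tilde{W}\|\,\|\hat{\Sigma}\|\,\|\Delta\|.
\end{equation*}

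The main obstacle, and the essential difference from Theorem~1, is establishing a uniform bound on $\|\Delta\|$. There $\Delta_u$ contained only the reconstruction error of one network at bounded states; here $\Delta$ stacks $\Delta_\delta(t_i)$ and $\Delta_m(t_i)$, which additionally carry the controller estimation error $\tilde{u}_i$, the parameter error $\tilde{\theta}(t_i)$, and the increments $f^o(x_i,u_i^*)-f^o(x_i,\hat{u}_i)$ and $\sigma(x_i,u_i^*)-\sigma(x_i,\hat{u}_i)$. I would bound these by noting that $\tilde{u}_i$ is bounded because Theorem~1 makes $\tilde{W}_u$ uniformly ultimately bounded while $\sigma_u$ is evaluated at the bounded selected states; that $\tilde{\theta}(t_i)$ is bounded by Assumption~\ref{assume: StateParamUUB}; that the increment terms are bounded through continuous differentiability, hence local Lipschitz continuity, of $f^o$ and $\sigma$ over the compact operating domain; and that $\epsilon_V$, $\epsilon_Q$, $\epsilon$ and the ideal weights $W_V^*$, $W_R^*$ are bounded on that same set. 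This produces a constant $\overline{\Delta}\geq\|\Delta\|$, and boundedness of the states and controls gives $\overline{\Sigma}\geq\|\hat{\Sigma}\|$.

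With these constants fixed I would finish as in Theorem~1: split the cross term $2\alpha\|\tilde{W}\|\,\overline{\Sigma}\,\overline{\Delta}$ via Young's inequality, absorb the quadratic part into the negative term, and arrive at $\dot{V}\leq-AV+B$ with $A=\frac{\underline{\Gamma}}{2}\left(\alpha\underline{\sigma}+\frac{\beta}{\overline{\Gamma}}\right)$ and $B=\frac{2(\alpha\overline{\Sigma}\,\overline{\Delta})^2}{\alpha\underline{\sigma}+\beta/\overline{\Gamma}}$. Solving this scalar comparison inequality gives $V(t)\leq V(T_0)e^{-A(t-T_0)}+B/A$, whence $\limsup_{t\to\infty}\|\tilde{W}(t)\|\leq\sqrt{\overline{\Gamma}B/A}$, which is the claimed uniform ultimate boundedness of $\tilde{W}$.
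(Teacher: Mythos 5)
Your proposal is correct in substance and shares its core machinery with the paper's proof: the same least-squares Lyapunov function $\tilde{W}^T\Gamma^{-1}(t)\tilde{W}$ (the paper carries a factor $\tfrac{1}{2}$, which is immaterial), the same derivative computation via $\dot{\Gamma}^{-1}=-\Gamma^{-1}\dot{\Gamma}\Gamma^{-1}$, the same use of the full-rank condition \eqref{def: Full Rank} and gain bounds \eqref{eq:NLSStaFGammaBoundNoState}, and Young's inequality to reach a scalar comparison inequality. Where you genuinely diverge is in the treatment of the perturbation terms and of purging. You absorb everything into static worst-case constants $\|\hat{\Sigma}\|\leq\overline{\Sigma}$ and $\|\Delta\|\leq\overline{\Delta}$, justified by boundedness of $\tilde{W}_u$, $\tilde{\theta}$, and compactness/Lipschitz arguments, and then run a single comparison argument to get the ultimate bound $\sqrt{\overline{\Gamma}B/A}$, in which $B$ scales with the \emph{worst-case transient} estimation errors. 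The paper instead substitutes $\hat{\Sigma}=\Sigma-\tilde{\Sigma}$, bounds $\|\tilde{\Sigma}\|\leq\left(\|\tilde{u}\|+\|\tilde{\theta}\|\right)\overline{\Sigma}$ via Remark \ref{rem: Lipschitz} and $\|\Delta\|\leq\left(\|\tilde{u}\|+\|\tilde{\theta}\|\right)\overline{\Delta}+\overline{\Delta}_\epsilon$, so the disturbance entering the comparison inequality is explicitly proportional to the decaying errors $\tilde{u}$, $\tilde{\theta}$ plus a pure function-approximation residual; it then analyzes the estimator over the purge intervals $[T_s,T_{s+1})$ with a minimum dwell time $\mathcal{T}$, shows the interval constants $D_s$ are decreasing, and obtains the ultimate bound $2\sqrt{\overline{\Gamma}\,\overline{D}/C}$ in which $\overline{D}$ depends only on the \emph{asymptotic} error level $\overline{\mathcal{E}}_N$ and on $\overline{\Delta}_\epsilon$. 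Both routes prove the stated UUB claim; yours is shorter and more elementary, while the paper's quantifies the benefit of purging, namely that the ultimate bound on $\tilde{W}$ is governed by the ultimate, not transient, quality of the controller and parameter estimates—which is the reason purging is in the algorithm at all.

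One point you should make explicit: because $\mathcal{H}^{IRL}$ is purged at the instants $T_1,T_2,\ldots$, the regressor $\hat{\Sigma}(t)$ is only piecewise constant/continuous, so your differential inequality $\dot{V}\leq-AV+B$ holds only between purge instants. Your argument still goes through because $\hat{W}$ and $\Gamma$ (hence $V$) evolve continuously across purges and your constants $\overline{\Sigma}$, $\overline{\Delta}$ are uniform over all admissible stack contents (for the latter you need the for-all-$t$ bound on $\tilde{W}_u$ that follows from the exponential solution in Theorem~1, not merely its ultimate bound), so the comparison lemma can be chained over $[T_0,\infty)$. Without saying this, the step from the per-interval inequality to $V(t)\leq V(T_0)e^{-A(t-T_0)}+B/A$ for all $t$ is unjustified.
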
	
	\begin{proof}
		Consider the positive definite candidate Lyapunov function
		\begin{equation}
		V(\tilde{W},t)=\frac{1}{2}\tilde{W}^T\Gamma^{-1}\left(t\right)\tilde{W}. \label{eqn:Brunovsky V_estimation}
		\end{equation}
		Using the bounds in \eqref{eq:NLSStaFGammaBoundNoState}, the candidate Lyapunov function satisfies 
		\begin{equation}
		\frac{1}{2\overline{\Gamma}}\left\Vert \tilde{W}\right\Vert ^{2}\leq V\left(\tilde{W},t\right)\leq\frac{1}{2\underline{\Gamma}}\left\Vert \tilde{W}\right\Vert ^{2}.\label{eq:Brunovsky NLSCLNoXDotVBounds-1}
		\end{equation}
		Taking the time-derivative of \eqref{eqn:Brunovsky V_estimation}, and 
%		\begin{equation}
%		\label{eqn:Brunovsky Dot_V_estimation}
%		\dot{V}(\tilde{W},t) =\tilde{W}^T\Gamma^{-1}\left(t\right)\dot{\tilde{W}}+\frac{1}{2}\tilde{W}^T\dot{\Gamma}^{-1}\left(t\right)\tilde{W}.
%		\end{equation}
		using \eqref{eq:WGamma Dynamics} and \eqref{eq:Brunovsky Werrordyns}, along with the identity $\dot{\Gamma}^{-1}=-\Gamma^{-1}\dot{\Gamma}\Gamma^{-1}$, after simplifying the time-derivative can be expressed as
		\begin{multline}
		%\label{eq:Dot_V_estimation_final}
		\dot{V}(\tilde{W},t)=-\frac{1}{2}\alpha\tilde{W}^T\hat{\Sigma}^T\hat{\Sigma}\tilde{W}+\alpha\tilde{W}^T\hat{\Sigma}^T\Delta\\-\frac{1}{2}\beta\tilde{W}^T\Gamma^{-1}\left(t\right)\tilde{W}.
		\end{multline}
		Substituting in $\hat{\Sigma}=\Sigma-\tilde{\Sigma}$, and using the Cauchy-Schwartz inequality and bounds in \eqref{def: Full Rank} and (\ref{eq:NLSStaFGammaBoundNoState}), $\dot{V}$ can be bounded by
		\begin{multline}
		\label{eq:Brunovsky Dot_V_estimation_bound2}
		\dot{V}(\tilde{W},t)\leq -\frac{1}{2}\left(\alpha\underline{\sigma}+\frac{1}{\overline{\Gamma}}\beta\right)\left\Vert\tilde{W}\right\Vert^2+\alpha\Vert\tilde{W}\Vert\Vert\Sigma\Vert\Vert\Delta\Vert\\+\alpha\Vert\tilde{W}\Vert\Vert\tilde{\Sigma}\Vert\Vert\Delta\Vert.
		\end{multline}
		\begin{remark}\label{rem: Lipschitz}
			Since $\left(x,u\right)\mapsto f\left(x,u\right)$, $\left(x,u\right)\mapsto\sigma\left(x,u\right)$, $u\mapsto \sigma_u\left(u\right)$, and $x\mapsto \sigma_V\left(x\right)$ are continuously differentiable,
			and since $t\mapsto u\left(t\right)$ is bounded,
			given a compact set $\hat{\chi}\subset\mathbb{R}^n\times\mathbb{R}^m\times\mathbb{R}^{m}$,
			there exist $L_{\sigma1},L_{F1},L_{R1}>0$ 
			such that \begin{align}
			%	\sup_{\left(x,u,\hat{x}\right)\in\hat{\chi}}\left\Vert \nabla_{x}\tilde{\sigma}_V\left(x,\hat{x}\right)\right\Vert &\leq L_{\sigma V}\left\Vert \tilde{x}\right\Vert,\nonumber\\ 
			\sup_{\left(x,u,\hat{u}\right)\in\hat{\chi}}\left\Vert \tilde{\sigma}\left(x,u,\hat{u}\right)\right\Vert &\leq L_{\sigma 1}\left\Vert \tilde{u}\right\Vert, \nonumber\\
			%\sup_{\left(x,u,\hat{u}\right)\in\hat{\chi}}\left\Vert \nabla_u\tilde{\sigma}\left(x,u,\hat{u}\right)\right\Vert &\leq L_{\sigma 2}\left\Vert \tilde{u}\right\Vert,\nonumber\\ 
			\sup_{\left(x,u,\hat{u}\right)\in\hat{\chi}}\left\Vert \tilde{f^{o}}\left(x,u,\hat{u}\right)\right\Vert &\leq L_{F1}\left\Vert \tilde{u}\right\Vert, \nonumber\\
			%\sup_{\left(x,u,\hat{u}\right)\in\hat{\chi}}\left\Vert \nabla_u\tilde{f^{o}}\left(x,u,\hat{u}\right)\right\Vert &\leq L_{F2}\left\Vert \tilde{u}\right\Vert, \nonumber\\
			\sup_{\left(x,u,\hat{u}\right)\in\hat{\chi}}\left\Vert \tilde{\sigma}_u\left(u,\hat{u}\right)\right\Vert &\leq L_{R1}\left\Vert \tilde{u}\right\Vert.
			\end{align}
		\end{remark}
		Using Remark \ref{rem: Lipschitz}, the term $\left\Vert\tilde{\Sigma}\right\Vert$ can be expressed in terms of $\tilde{u}$ and $\tilde{\theta}$ as
		\begin{equation}
		\left\Vert\tilde{\Sigma}\right\Vert\leq\left(\left\Vert\tilde{u}\right\Vert+\left\Vert\tilde{\theta}\right\Vert\right) \overline{\Sigma}, \label{eq: Brunovsky Sigma Tilde Bound}
		\end{equation}
		where
		\begin{multline}
		\overline{\Sigma}:=N\sup_{\substack{\left(x,u,\hat{u}\right)\in \hat{\chi}}}\Big\{\Vert\nabla_x\sigma_V(x)\Vert \Big(L_{F1}+L_{\sigma 1}(\Vert\tilde{u}\Vert+\Vert\tilde{\theta}\Vert)\\+L_{\sigma 1}\Vert\theta\Vert+\Vert\sigma(x,u)\Vert+\Vert\nabla_u\sigma(x,u)\Vert\Big),2+L_{R1}\Big\}.
		\end{multline}
		The term $\left\Vert\Sigma\right\Vert$, which contains true values of the unknown states and parameters, is bounded above since it is a function of only true controls and parameters, $u$ and $\theta$, and queried states $x_i$. Let the upper bound on $\left\Vert\Sigma\right\Vert$ be denoted as
		\begin{equation}\label{eq: Brunovsky Sigma Bound}
		\left\Vert\Sigma\right\Vert\leq\overline{\Sigma}_\sigma,
		\end{equation}
		where
		\begin{align}
		\overline{\Sigma}&_\sigma:=N\sup_{\substack{ x \in x\left(\cdot\right) \\ u \in u\left(\cdot\right)}}\Big\{\left\Vert\nabla_x\sigma_{V}(x)\right\Vert\Big(\left\Vert f^o(x,u)\right\Vert+\left\Vert\theta\right\Vert\left\Vert\sigma(x,u)\right\Vert\nonumber\\&+\left\Vert\nabla_uf^o(x,u)\right\Vert+\left\Vert\theta\right\Vert\left\Vert\nabla_u\sigma(x,u)\right\Vert\Big),\left\Vert\sigma_{u}^-\left(u\right)\right\Vert,\nonumber\\&\left\Vert\nabla_u\sigma_u^-\left(u\right)\right\Vert\Big\}.
		\end{align}
		The error term $\Vert\Delta\Vert$ is bounded above by
		\begin{equation}
			\Vert\Delta\Vert\leq\left(\Vert\tilde{u}\Vert+\Vert\tilde{\theta}\Vert\right)\overline{\Delta}+\overline{\Delta}_\epsilon,\label{eq: error bound}
		\end{equation}
		where 
		\begin{multline}
			\overline{\Delta}:=N\sup_{\substack{\left(x,u,\hat{u}\right)\in \hat{\chi}}}\Big\{L_{R1}\Vert W_R^*\Vert+2\Vert R\Vert\\+\Vert\left[\nabla_u\sigma\right](x)\Vert\Vert\left[\nabla_u\sigma_V\right](x)\Vert\Vert W_V^*\Vert\\+\Vert\sigma(x,\hat{u})\Vert\Vert\left[\nabla_x\sigma_V\right](x)\Vert\Vert W_V^*\Vert\\+L_{F1}\Vert\left[\nabla_x\sigma_V\right](x)\Vert\Vert W_V^*\Vert\\+L_{\sigma1}\Vert\theta^T\Vert\Vert\left[\nabla_x\sigma_V\right](x)\Vert\Vert W_V^*\Vert\Big\},
		\end{multline}
		and 
		\begin{multline}
			\overline{\Delta}_\epsilon:=N\sup_{\substack{\left(x,u,\hat{u}\right)\in \hat{\chi}}}\Big\{\Vert\epsilon_V(x)\Vert+\Vert\epsilon_Q(x)\Vert\\+\Vert\epsilon(x,u^*)\Vert\Vert\left[\nabla_x\sigma_V\right](x)\Vert\Vert W_V^*\Vert\\+\Big(\Vert\left[\nabla_uf^o\right](x)\Vert+\Vert\theta\Vert\Vert\left[\nabla_u\sigma\right](x)\Vert\Big)\Vert\left[\nabla_x\epsilon_V\right](x)\Vert\\+\Vert\left[\nabla_u\epsilon\right](x,u^*)\Vert\Vert\left[\nabla_x\sigma_V\right](x)\Vert\Vert W_V^*\Vert\Big\}.
		\end{multline}
		Using \eqref{eq: Brunovsky Sigma Tilde Bound}, \eqref{eq: Brunovsky Sigma Bound} and \eqref{eq: error bound}, $\dot{V}$ becomes
		\begin{multline}
		\!\!\dot{V}(\tilde{W},t)\leq -\frac{1}{2}\left(\alpha\underline{\sigma}+\frac{1}{\overline{\Gamma}}\beta\right)\left\Vert\tilde{W}\right\Vert^2+\alpha\overline{\Delta}_\epsilon\overline{\Sigma}_\sigma\left\Vert\tilde{W}\right\Vert\\+\alpha\overline{\Delta}_\epsilon\overline{\Sigma}\left\Vert\tilde{W}\right\Vert\!\left(\!\left\Vert\tilde{u}\right\Vert\!+\!\left\Vert\tilde{\theta}\right\Vert\right)+\alpha\overline{\Sigma}_\sigma\overline{\Delta}\left\Vert\tilde{W}\right\Vert\!\left(\!\left\Vert\tilde{u}\right\Vert\!+\!\left\Vert\tilde{\theta}\right\Vert\right)\\+\alpha\overline{\Sigma} \ \overline{\Delta}\left\Vert\tilde{W}\right\Vert\left(\left\Vert\tilde{u}\right\Vert+\left\Vert\tilde{\theta}\right\Vert\right)^2.
		\end{multline}
		Using Young's Inequality $\dot{V}$ then becomes
		\begin{multline}
		\label{eq:Brunovsky Dot_V_estimation_Final bound}
		\!\!\dot{V}(\tilde{W},t)\leq  -\frac{1}{8}\left(\alpha\underline{\sigma}+\frac{1}{\overline{\Gamma}}\beta\right)\left\Vert\tilde{W}\right\Vert^2+\frac{2\alpha^2\overline{\Delta}_\epsilon^2\overline{\Sigma}_\sigma^2}{\alpha\underline{\sigma}+\nicefrac{\beta}{\overline{\Gamma}}}\\+\frac{\alpha^2\overline{\mathcal{E}}^2\left(\overline{\Delta}_\epsilon\overline{\Sigma}+\overline{\Sigma}_\sigma\overline{\Delta}+\overline{\Sigma} \ \overline{\Delta} \ \overline{{\mathcal{E}}}\right)^2}{\alpha\underline{\sigma}+\nicefrac{\beta}{\overline{\Gamma}}},
		\end{multline}
		where $\overline{\mathcal{E}} = \left(\overline{\tilde{u}}+\overline{\tilde{\theta}}\right)$. The notation, $\overline{\tilde{u}}$ and $\overline{\tilde{\theta}}$, denote bounded $\tilde{u}$ and $\tilde{\theta}$ values stored in the history stack, $\mathcal{H}^{IRL}$.
		Using the bound in \eqref{eq:Brunovsky NLSCLNoXDotVBounds-1}, the differential inequality for $\dot{V}$ can be expressed as
		\begin{equation}
		\label{eq: Brunovsky Differential V of Z}
		\!\!\dot{V}(\tilde{W},t)\leq -CV\left(\tilde{W},t\right)+D,
		\end{equation}
		where
		\begin{equation}
		C:=\frac{\underline{\Gamma}}{4}\left(\alpha\underline{\sigma}+\frac{1}{\overline{\Gamma}}\beta\right),
		\end{equation}
		\begin{multline}
		D:=\frac{\alpha^2\overline{\mathcal{E}}^2\left(\overline{\Delta}_\epsilon\overline{\Sigma}+\overline{\Sigma}_\sigma\overline{\Delta}+\overline{\Sigma} \ \overline{\Delta} \ \overline{{\mathcal{E}}}\right)^2}{\alpha\underline{\sigma}+\nicefrac{\beta}{\overline{\Gamma}}}+\frac{2\alpha^2\overline{\Delta}_\epsilon^2\overline{\Sigma}_\sigma^2}{\alpha\underline{\sigma}+\nicefrac{\beta}{\overline{\Gamma}}}.
		\end{multline}
		
		Due to purging of $\mathcal{H}^{IRL}$, the estimator is analyzed over discrete time instances. Define the purging instances as $T_1,T_2,\ldots$, and maintain a minimum dwell time, $\mathcal{T}$, such that $T_{s+1}-T_s\geq\mathcal{T} >0, \ \forall s\in\mathbb{N}$.
		
		Solving equation $\eqref{eq: Brunovsky Differential V of Z}$ over any time interval $[T_s,T_{s+1})$, yields
		\begin{equation}\label{eq: Brunovsky V Equation}
		\overline{V}_{s+1}\leq \overline{V}_se^{-C\left(t-T_s\right)}+\frac{D_{s+1}}{C},
		\end{equation}
		where $\overline{V}_s\geq\left\Vert V\left(\tilde{W}\left(T_{s}\right),T_{s}\right)\right\Vert$ and $D_{s+1}$ denotes the value of $D$ over interval $[T_s,T_{s+1})$.  Since we know that $\tilde{\theta}$ and $\tilde{u}$ decay exponentially to a bound, we know that $\overline{\mathcal{E}}$ is decreasing exponentially. Therefore, due to the decreasing error term $\overline{\mathcal{E}}$, it can be seen that
		\begin{equation}\label{eq: Brunovsky Decreasing B}
		D_s>D_{s+1}, \forall s = 1,2,\ldots
		\end{equation}
		and \begin{multline}
		\overline{D}:=\lim\limits_{s\to\infty}D_s=\frac{2\alpha^2\overline{\Delta}_\epsilon^2\overline{\Sigma}_\sigma^2}{\alpha\underline{\sigma}+\nicefrac{\beta}{\overline{\Gamma}}}\\+\frac{\alpha^2\overline{\mathcal{E}}_N^2\left(\overline{\Delta}_\epsilon\overline{\Sigma}+\overline{\Sigma}_\sigma\overline{\Delta}+\overline{\Sigma} \ \overline{\Delta} \ \overline{{\mathcal{E}}}_N\right)^2}{\alpha\underline{\sigma}+\nicefrac{\beta}{\overline{\Gamma}}},\end{multline} where $\overline{\mathcal{E}}_N:=\left(\sqrt{\overline{\Gamma}_u\frac{{B}}{A}}+\overline{\theta}_\infty\right)$, and $\overline{\theta}_\infty$ denotes the ultimate bound of the parameter estimation error $\tilde{\theta}$. Furthermore, the dwell time condition results in the bound
		\begin{align}
		\overline{V}_{s+1}\leq \overline{V}_se^{-C\mathcal{T}}+\frac{D_{s+1}}{C}, \forall s = 0,1,2,\ldots
		\end{align}
		If the bounds $D_{s+1}$ are selected so that 
		\begin{equation}\label{eq:Brunovsky ref}
		D_{s+1}>2D_se^{-C\mathcal{T}},\forall s = 0,1,2, \ldots,
		\end{equation} then
		\begin{equation}
		\overline{V}_{s+1}\leq \frac{2D_{s+1}}{C}, \forall s = 0,1,2,\ldots,
		\end{equation}
		where $D_0:=\frac{C\overline{V_0}}{2}$. As a result, it can be concluded that
		\begin{equation}
		\lim\limits_{s \to \infty}\sup\overline{V}_s\leq\frac{2\overline{D}}{C},
		\end{equation}and as a result  $\lim\limits_{s\to\infty}\sup\left\Vert\tilde{W}\left(T_s\right)\right\Vert\leq2\sqrt{\overline{\Gamma}\frac{\overline{D}}{C}}$.
		%It can further be concluded that the error on $\left\Vert\tilde{W}\right\Vert$ can be expressed as
		%\begin{equation}
		%		\!\!\lim\limits_{s\to\infty}\sup\!\left\Vert\tilde{W}\left(T_s\right)\right\Vert\!\!=\!\mathcal{O}\bigg(\!\!\!\left(\overline{X}_N+\overline{\theta}_N\right)\!\!\left(\overline{\Delta}_\epsilon+\overline{X}_N+\overline{\theta}_N\right)+\overline{\Delta}_\epsilon\!\!\bigg).
		%		\end{equation}
		
	\end{proof}

	\section{Simulation}\label{sec: Sim}
	To demonstrate the performance of the developed method,
	a linear optimal trajectory tracking problem, using the method developed in \cite{SCC.Kamalapurkar.Dinh.ea2015,SCC.Kamalapurkar.Andrews.ea2017}, is utilized
	in order to have a known value function for comparison.
	
	Consider an agent with the following linear dynamics
	\begin{equation}
		\dot{x}=\begin{bmatrix}
		0 & 1\\
		\theta_1 & \theta_2
		\end{bmatrix}+\begin{bmatrix}
		0\\\theta_3
		\end{bmatrix}u,
	\end{equation}
	where the unknown parameters are $\theta_1=-0.5,\theta_2=-0.5,$ and $\theta_3=1$. The parameter estimation technique utilized is developed in \cite{SCC.Kamalapurkar2017}. 
	
	The trajectory the agent is attempting to follow is generated from the linear system
	\begin{equation}
		\dot{x}_d = \begin{bmatrix}
		0 & 1\\
		-2 & 0
		\end{bmatrix}x_d.
	\end{equation}
%	\begin{figure}
%		\includegraphics[width=1\columnwidth]{DICLPurgeNoNoisexxd.pdf}
%		\caption{State trajectory.}
%		\label{fig: State Tracking}
%	\end{figure}
	The optimal control problem designed on the error dynamics is
	\begin{equation}
		J(e_0,\mu(\cdot)) = \int_{T_0}^{\infty}e(t)^T\begin{bmatrix}
		1 & 0\\
		0 & 1
		\end{bmatrix}e(t)+10\mu(t)^2\dif t,
	\end{equation}
	resulting in the ideal reward function weights $Q =\text{diag}([W_{Q_1}, \ W_{Q_2}])= \text{diag}([1, \ 1])$ and $R = 10$ where the error dynamics are
	\begin{equation}
		\dot{e}=\begin{bmatrix}
		0 & 1\\
		-0.5 & -0.5
		\end{bmatrix}e+\begin{bmatrix}
		0\\1
		\end{bmatrix}\mu,
	\end{equation}
	where $e=x-x_d$, $\mu=u-u_d$, and $u_d= [ -1.5, \ 0.5] \ x_d$. The optimal value function to be estimated is
	\begin{equation}
		V^*=W_{V_1}e_1^2+W_{V_2}e_2^2+W_{V_3}e_1e_2,
	\end{equation}
	where the ideal values are $W_{V_1}=1.82,W_{V_2}=2.30,$ and $W_{V_3}=1.83$. The optimal controller is $\mu=-[0.092, \	0.230]e$.
	
	Fig. \ref{fig: Tracking error} shows the tracking error and Fig. \ref{fig: Parameter error} shows the parameter estimation error. The parameters used for the two simulations are: $\beta=0.5,\alpha=\nicefrac{0.01}{50},\beta_u = 2, \alpha_u = 1, M=50,N=50$ and a step size of $0.005s.$
	
	\begin{figure}
		\includegraphics[width=1\columnwidth]{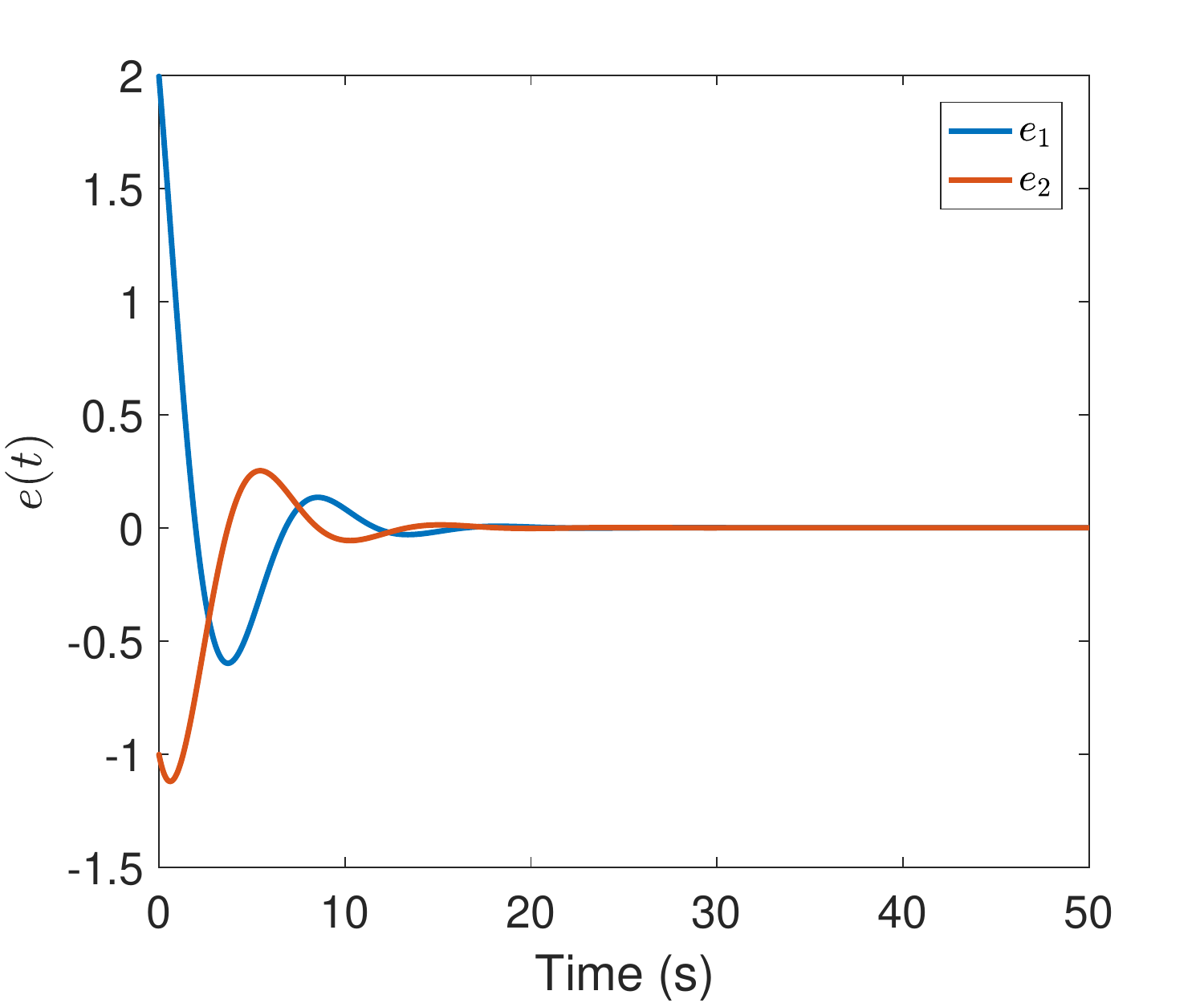}
		\caption{Trajectory tracking error.}
		\label{fig: Tracking error}
	\end{figure}
	
	\begin{figure}
		\includegraphics[width=1\columnwidth]{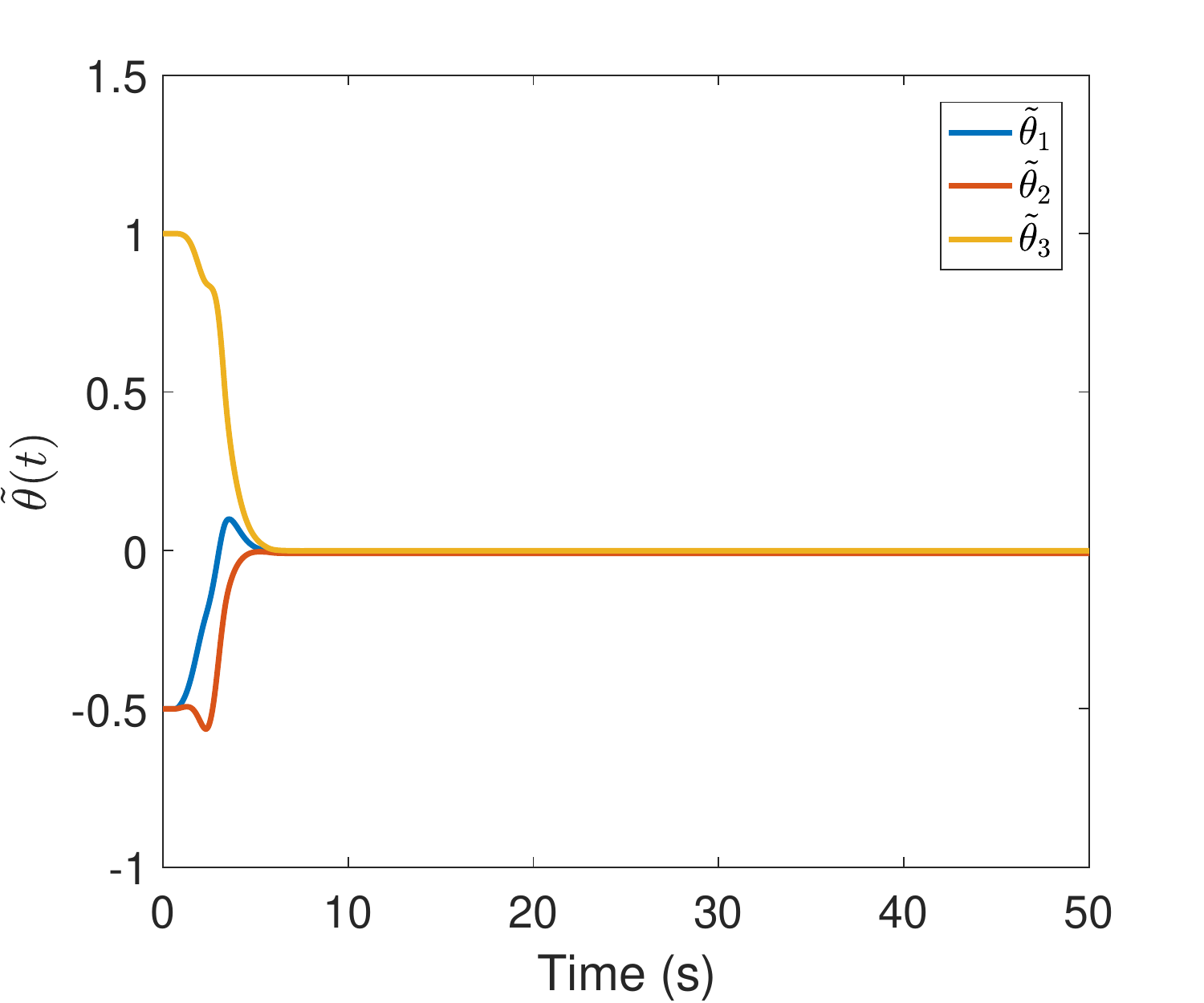}
		\caption{Parameter estimation error.}
		\label{fig: Parameter error}
	\end{figure}

%	\FloatBarrier
	
	\subsection{IRL without Data Querying}
	
	The first simulation utilizes the state and control trajectories directly for IRL, and does not estimate the optimal controller for additional data. Fig. \ref{fig: Reward function without Wu} shows reward and value function estimation errors without queried data.
	
	As seen in Fig. \ref{fig: Reward function without Wu}, the reward and value function estimates do not converge to the ideal values. Looking closer, the estimates do not change much at all. The reason for this is once the history stacks are purged to remove poor parameter estimates, $\hat{\theta}$, the tracking errors, $e$, have decreased near the origin. Meaning, the data that IRL is utilizing, both $e$ and $\mu$, are at or near zero. This data does not provide sufficient information in order to accurately estimate the reward function. 
	
	\begin{figure}
		\includegraphics[width=1\columnwidth]{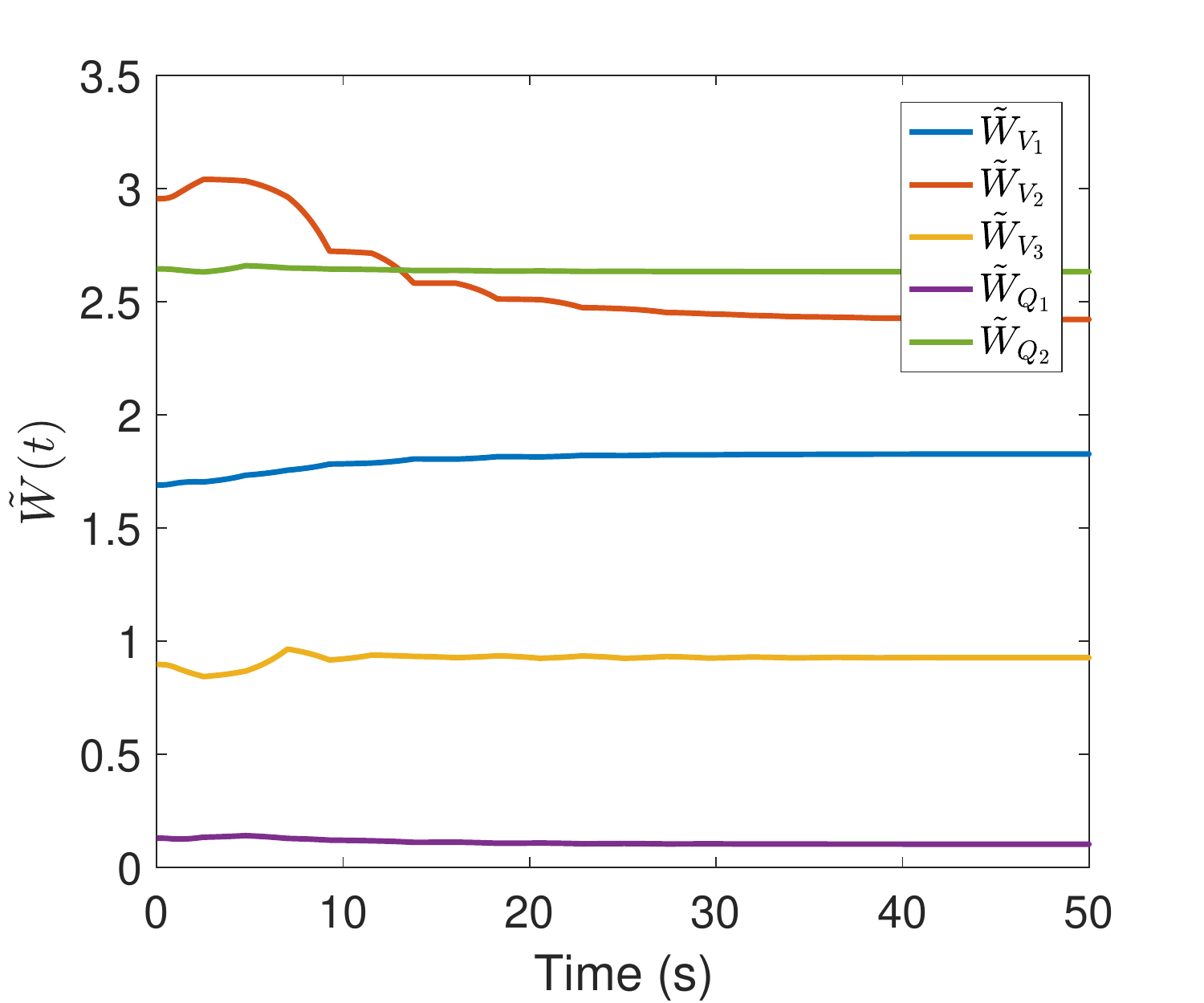}
		\caption{Reward and value function estimation error without data querying.}
		\label{fig: Reward function without Wu}
	\end{figure}
 
 	%\FloatBarrier
	\subsection{IRL Formulation with Data Querying}
	
	The second simulation shows the results of the novel control-estimation-based technique developed in this paper, with queried data points. 	
	\begin{figure}
		\includegraphics[width=1\columnwidth]{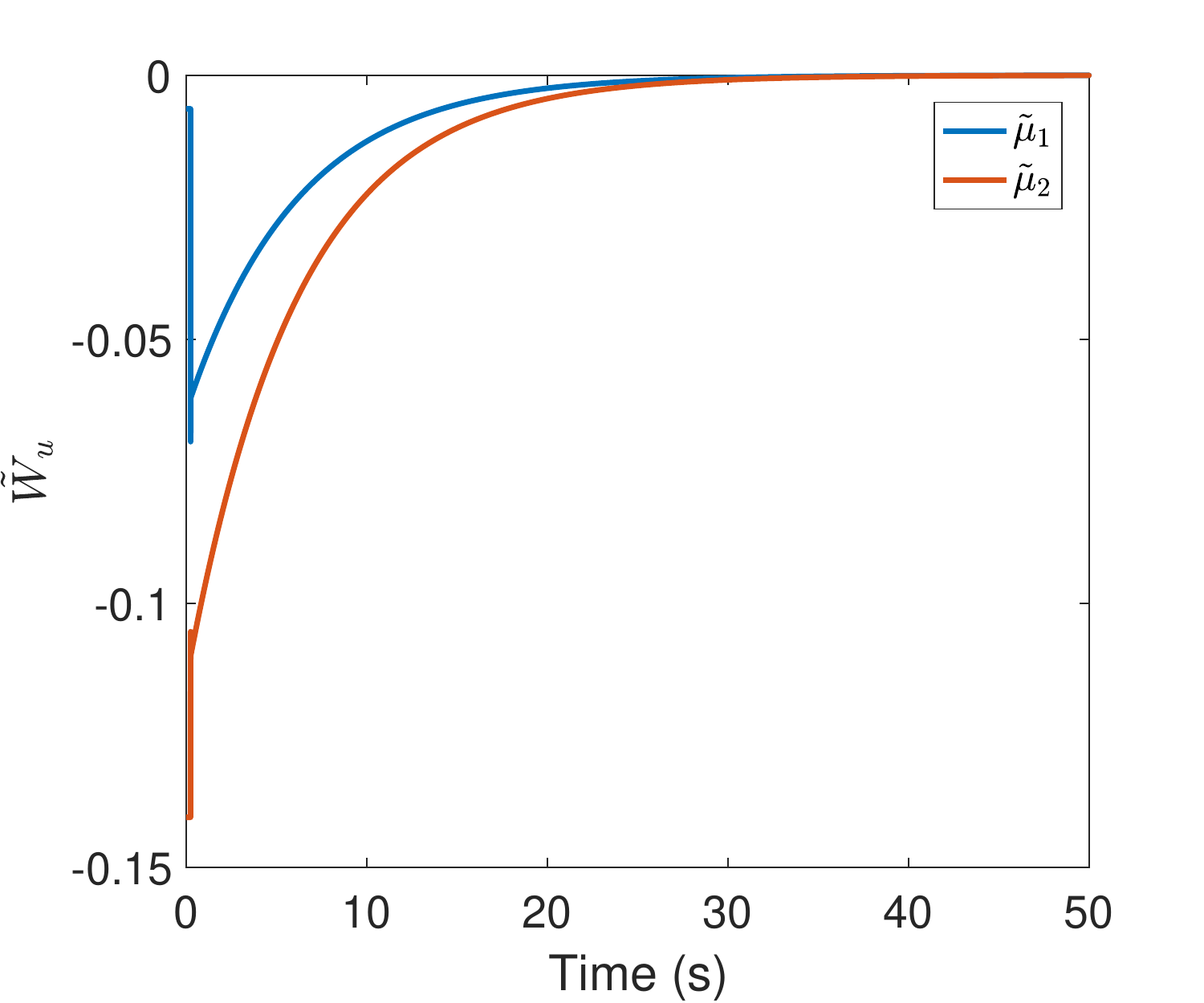}
		\caption{Optimal feedback controller estimation error.}
		\label{fig: Wu}
	\end{figure}	Utilizing the estimate of the optimal controller, the estimate is queried with random states $x_i$ in the set $[-1,1]$, which produce estimates of the optimal controller, $\hat{u}_i$. The pairs $(x_i,\hat{u}_i)$ are then iteratively collected in $\mathcal{H}^{IRL}$ and IRL is performed utilizing the update law in \eqref{eq:W Dynamics}.

\begin{figure}
	\includegraphics[width=1\columnwidth]{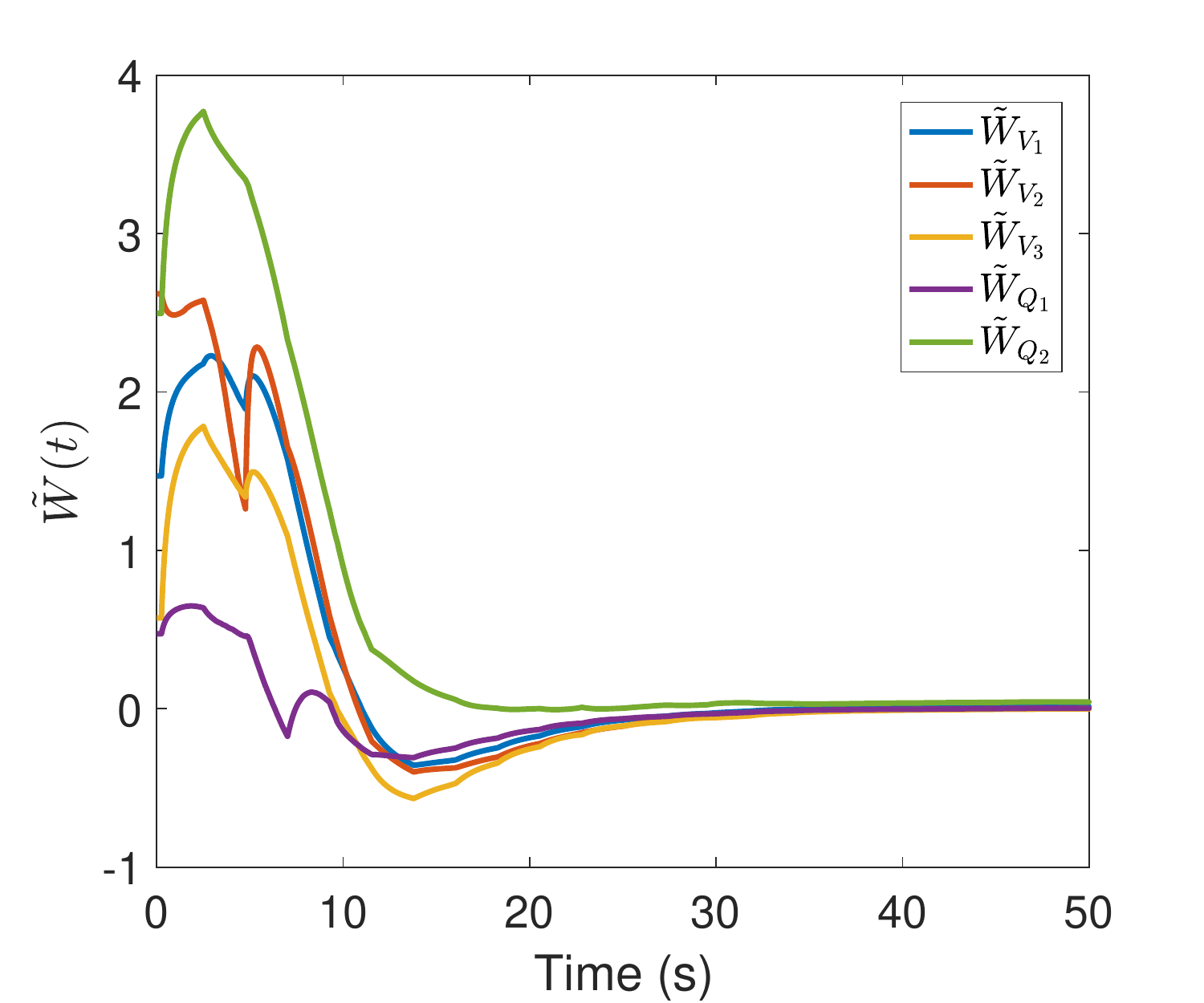}
	\caption{Reward and value function estimation error with data querying.}
	\label{fig: Reward function with Wu}
\end{figure}

	Fig. \ref{fig: Wu} shows the estimation error for the optimal feedback controller, and Fig. \ref{fig: Reward function with Wu} shows the reward and value function estimation errors.

	As seen in Fig. \ref{fig: Reward function with Wu}, the new IRL approach estimates the ideal values of the reward and value functions online. Though the tracking errors of the system dynamics have already converged, due to the non-zero queried state and control values available through feedback estimation, IRL is able to converge.

	\section{Conclusion}\label{sec: Conclusion}
	This paper presents a new approach to performing reward function estimation online for situations with limited data. The approach utilizes a concurrent learning update law to estimate the optimal feedback controller of the agent online. This estimate is then utilized to artificially create additional data to promote reward function estimation. Theoretical guarantees are provided showing uniform ultimate boundedness of the unknown reward and value functions estimation errors using Lyapunov theory. A simulation example is performed that clearly shows the benefit of the method and how this additional queried data helps promote reward function estimation.
	
	Future work will include analyzing the performance of this approach for systems with unmeasurable states and the affect of noise on optimal control estimation.
	\bibliographystyle{ieeetran}
	\bibliography{scc,sccmaster}
\end{document}